\theoremstyle{plain}
\theoremstyle{plain}
\newtheorem{thm}{Theorem}[]
\newtheorem{defn}{Definition}[]
\newcommand{\HF}{\mathcal{H}}
\newcommand{\HG}{\mathcal{G}}
\title{When Hashing Met Matching: \\ Efficient Spatio-Temporal Search for Ridesharing}
\author{Chinmoy Dutta}
\date{}
\begin{document}
\maketitle

\begin{abstract}
  Carpooling, or sharing a ride with other passengers, holds immense potential for urban transportation. Ridesharing platforms enable such sharing of rides using real-time data. Finding ride matches in real-time at urban scale is a difficult combinatorial optimization task and mostly heuristic approaches are applied. In this work, we mathematically model the problem as that of finding near-neighbors and devise a novel efficient spatio-temporal search algorithm based on the theory of locality sensitive hashing for Maximum Inner Product Search (MIPS). The proposed algorithm can find $k$ near-optimal potential matches for every ride from a pool of $n$ rides in time $O(n^{1 + \rho} (k + \log n) \log k)$ and space $O(n^{1 + \rho} \log k)$ for a small $\rho < 1$. Our algorithm can be extended in several useful and interesting ways increasing its practical appeal. Experiments with large NY yellow taxi trip datasets show that our algorithm consistently outperforms state-of-the-art heuristic methods thereby proving its practical applicability.
\end{abstract}

\pagebreak

\section{Introduction}
\label{sec:intro}
In this paper, we consider carpooling or sharing a ride with other passengers. The importance of this problem for planning urban transportation and designing "smart cities" lies in its promise to provide a solution for serious urban issues of excessive traffic congestion, resource consumption and air pollution~\cite{AS94,CAWYB13}. 



Real-time information and monitoring of urban mobility and the ability to do large scale computation on real-time data together allow ridesharing platforms to enable sharing of rides to unprecedented levels by exploiting unused vehicle capacity. A number of works~\cite{S14,AESW11,MZW13} have studied urban traffic and confirmed the tremendous potential of real-time urban-scale ridesharing to reduce the burden on urban transportation. In order to balance the costs (detours, increased travel time, loss of privacy) with the utility (traffic decongestion, decreased resource consumption and air pollution, cheaper rides, use of high-occupancy lanes, marketplace efficiency) of ridesharing, matching rides in a way that minimizes cost and maximizes utility of sharing is of paramount importance for the success of ridesharing. 
 
The graph-theoretic framework of {\em shareability networks}, introduced in~\cite{S14} and extended in the influential work of Alonso-Mora et al.~\cite{ASWFR17}, has been the most promising approach to tackle the challenge of urban-scale, real-time ride matching. The nodes of the network are either rides or driver routes. An edge between two rides represents feasibility of matching them together while an edge between a ride and and a driver route represents feasibility of adding the ride to the driver route. Here, feasibility means that the platform-defined user experience constraints (like maximum pickup wait, maximum detour etc.) are satisfied for all the rides concerned. The basic building block of shareability network has been used in several research works~\cite{ASWFR17,SMG19,S14}, which build on top of this framework and solve various matching problems (bipartite, non-bipartite, hypergraph) to enable real-time ridesharing.



Computing the real-time shareability network, however, is a computationally intensive task. Comparing nodes with each other for feasibility has a time complexity of $O(n^2)$ and requires $O(n^2)$ calls to a routing service for a match pool with $n$ nodes (rides or driver routes). For a ride-sharing platform, it is common to have several thousand rides in the match pool for a large, densely populated urban region at peak commute times. Such platforms also employ features such as batching of arriving ride requests (for example, Lyft shared saver, Uber express pool), swapping already made matches etc. which further thicken the match pool. To generate the shareability network by a quadratic complexity search is therefore ruled out.

Heuristic-based methods have been proposed and comprise the state-of-the-art to deal with this combinatorial challenge. These methods output a small set of potential matches for each node using some heuristic search algorithm. For example, ~\cite{ASWFR17} suggested constructing edges only between ride requests with nearby pick-up locations. However, it is not uncommon to find large number of co-located ride requests at roughly the same time (e.g outside train stops when trains arrive, outside event venues when events end). Moreover, a service that lets passengers wait longer for cheaper rides or lets passengers schedule their rides in advance should be able to make matches between rides with far-away pickups. Another class of heuristics involves computing Haversine overlap between matches in the hope that it captures the true utility of matching. These methods need a way to contain the search space (brute force is still quadratic) and suffer from ignoring the real road network.

\subsection{Our contributions}
\label{subsec:contribution}
In this work, we propose a principled approach to tackle this combinatorial challenge. We devise an efficient randomized spatio-temporal search algorithm to construct a sparse, utility-aware shareability network.

In order to efficiently find a small set of high utility potential matches for each node, we use the theory of locality sensitive hashing (LSH). In particular, we use LSH construction for Maximum Inner Product Search (MIPS). At a high level, our approach works by finding suitable vector representations for rides and driver routes in a high-dimensional ambient vector space that capture their physical routes. Next, we define a similarity metric for this space that respects matching utility: larger similarity between vector representations imply higher matching utility between nodes. We then construct a locality sensitive hashing data structure for storing the vector representations that allows efficient search for similar vectors: for every vector, we can find $k$ (say $10$) approximately most-similar vectors (according to our similarity measure) efficiently in time $O(n^{1 + \rho} (k + \log n) \log k)$ and space $O(n^{1 + \rho} \log k)$ for a small $\rho < 1$. In the process, we only make $n$ calls to a routing service, one for each ride. This is sufficient to allow our algorithm to exploit the knowledge of the real road network.

In the rest of the paper, for simplicity of exposition, we only talk of matching rides, and describe how to handle driver routes in the last section. Our algorithm can be extended in several interesting ways. It works for matching utility defined in terms of any cost function which is linear in the route (e.g. travel duration, travel distance, tolls etc.). Moreover, it can handle time-varying cost functions (e.g. varying with traffic). It can work on a hybrid ridesharing system that provides real-time, on-demand rides as well as pre-scheduled ones. It even allows incremental computation. All these extensions are sketched in Section~\ref{sec:opt_ext}.

In order to demonstrate the practicality of our algorithm, we conducted large-scale experiments using publicly available NY yellow taxi trip datasets under different traffic patterns and varying load. Results showed that our algorithm consistently outperforms the best state-of-the-art heuristic method by as much as $6\%$.

To summarize, our contributions include:
\begin{enumerate}
    \item A novel principled approach for an efficient randomized construction of a sparse, real-time, utility-aware shareability network.
    
    \item Several interesting and useful extensions to the proposed algorithm that makes it even more practically appealing.
    
    \item Demonstration of practical utility of the algorithm via large-scale experiments.
\end{enumerate}
\section{Related literature}
\label{sec:lit_review}
The cost and effects of congestion for urban transportation has been widely studied~\cite{AS94,CAWYB13,PH13,CL12}.

Many of the studies related to mobility-on-demand and fleet management consider ridesharing without pooling rides, e.g~\cite{PSFR12,ZP14,SSGF16}. Heuristic-based solutions to matching problems were studied in~\cite{AESW11,MZW13}. There has also been a lot of research interest in studying autonomous ride-sharing systems~\cite{PSFR12,SSGF16,CA16}.

Recently, there has been a lot of research interest in carpooling and associated challenges. Quite a few studies have analyzed urban traffic and estimated huge potential for sharing rides~\cite{S14,AESW11,MZW13}. In fact, Santi et al~\cite{S14} showed about $80\%$ of rides in Manhattan can be shared by two riders.

Alonso-Mora et al.~\cite{ASWFR17} studied real-time high-capacity ride-shar-ing by extending the framework of shareability networks and proposed heuristic methods to construct the network based on spatial proximity of ride pickups. Simonetto et al.~\cite{SMG19} proposed a heuristic to speeden up matching at a loss of efficiency by matching arriving ride requests in small chunks by solving a bipartite matching at very high frequency. The greedy nature of their algorithm makes the system increasingly inefficient with increasing batch window of matching such as employed in Lyft shared saver, Uber express pool. Other approaches for sharing rides have been proposed in~\cite{BS15,STY15}. Very Recently, carpooling has also been formulated and studied in formal graph-theoretic online matching settings~\cite{HKTWZZ18,ABDJSS19}.

This nearest neighbor search (NNS) is a problem of major importance in several areas of science and engineering. Approximate nearest neighbor search techniques based on locality sensitive hashing (LSH) was introduced in~\cite{IM98,GIM99}. Owing to being parallelizable and suitable for high dimensional data, these techniques have found widespread use in research and industrial practice~\cite{GSM03,B01,JDS11,STSMIMD13}. However, LSH based techniques have not been introduced for the ride matching problem yet.

The approximate Maximum Inner Product search (MIPS) is a fundamental problem with variety of applications in areas such as recommendation systems~\cite{LCYM17,SRJ05}, deep learning~\cite{SS17} etc. The concept of Asymmetric LSH (ALSH) was proposed in~\cite{SL14} and several ALSH schemes for MIPS have been proposed since~\cite{SL14,SL15a,HMFFT18}. In our work, we use the ALSH construction of~\cite{SL15a} that reduces MIPS to Maximum Cosine Similarity Search (MCSS). We then use the cross-polytope LSH of~\cite{AILRS15} to solve MCSS efficiently.

\section{Preliminaries}
\label{sec:prelims}

\subsection{LSH and nearest neighbor search}
\label{subsec:lsh-nns}
For a similarity measure $S$ on a domain $D$, define a ball of radius $s$ centered around point $q$ as $B(q, s) \coloneqq \{p : S(q, p) \geq s\}$. 
\begin{defn}[Locality sensitive hashing]
\label{defn:lsh}
A family of functions $\HF = \{h : D \rightarrow U\}$ is called $(c, s, p_1, p_2)$-sensitive for domain $D$ with similarity measure $S$ if for any $p, q \in D$:
\begin{itemize}
\item if $p \in B(q, s)$ then $\Pr_{h \in \HF}[h(q) = h(p)] \geq p_1$,  
\item if $p \notin B(q, cs)$ then $\Pr_{h \in \HF}[h(q) = h(p)] \leq p_2$.
\end{itemize}
\end{defn}

In order for an LSH family to be useful, it must satisfy inequalities $p_1 > p_2$ and $c < 1$. Note that locality sensitive hashing can be defined both in terms of a similarity measure or a distance measure. The similarity measure definition is convenient for our purposes in this paper.

An LSH family can be used to solve approximate near neighbor search.
\begin{defn}[Approximate near neighbor search]
\label{defn:apx-near}
The $(c, s)$-approx-imate near neighbor search problem for $c < 1$ with failure probability $f$ is to construct a data structure over a set of points $P$ in domain $D$ with similarity measure $S$ supporting the following query:
given any query point $q \in D$, if $p \in B(q, s)$ for some $p \in P$, then report some $p' \in P \cap B(q, cs)$, with probability $1 - f$.
\end{defn}

\begin{thm}[Restatement of Theorem 3.4 in~\cite{HIM12}]
\label{thm:lsh2nn}
Let $p_1, p_2 \in (0,1)$, $c < 1$ and $\rho = \log(1/p_1)/\log(1/p_2) < 1$. Given an $(c, s, p_1, p_2)$-sensitive family $\HF$ for a domain $D$, there exists a data structure for $(c, s)$-approximate near neighbor search over a set $P \subset D$ of at most $n$ points with constant failure probability requiring $O(n^\rho \log n)$ query time and $O(n^{1+\rho})$ space.
\end{thm}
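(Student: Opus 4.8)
The plan is to use the classical bucketing construction of Indyk and Motwani. From the given $(c,s,p_1,p_2)$-sensitive family $\HF$, I would build an amplified family by concatenation: for a parameter $k$ to be fixed later, draw $h_1,\dots,h_k$ independently from $\HF$ and set $g=(h_1,\dots,h_k)$, so that for $p\in B(q,s)$ we have $\Pr[g(q)=g(p)]\ge p_1^k$, while for $p\notin B(q,cs)$ we have $\Pr[g(q)=g(p)]\le p_2^k$. The data structure consists of $L$ independent such functions $g_1,\dots,g_L$, each inducing a hash table that stores every $p\in P$ under the key $g_i(p)$. To answer a query $q$, the algorithm scans the points lying in the buckets $g_1(q),\dots,g_L(q)$ in turn and returns the first one that lies in $B(q,cs)$; to keep the running time under control it aborts (reporting failure) once more than $3L$ points have been examined.

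Next I would choose the parameters. Taking $k=\lceil\log_{1/p_2}n\rceil$ makes $p_2^k\le 1/n$, so the expected number of points of $P\setminus B(q,cs)$ landing in a single bucket $g_i(q)$ is at most $n\,p_2^k\le 1$, hence at most $L$ in expectation over all $L$ tables; by Markov's inequality the total number of such ``far'' points among the scanned buckets exceeds $3L$ with probability at most $1/3$. With the same $k$, the definition $\rho=\log(1/p_1)/\log(1/p_2)$ gives $p_1^{\log_{1/p_2}n}=n^{-\rho}$, so $p_1^k\ge p_1\cdot n^{-\rho}$ (the ceiling costs only the harmless factor $p_1$), meaning a fixed point $p^\ast\in B(q,s)$ collides with $q$ in a given table with probability at least $p_1 n^{-\rho}$. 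Setting $L=\Theta(n^\rho)$ (with a constant depending on $p_1$), the probability that $p^\ast$ collides with $q$ in none of the $L$ tables is at most $(1-p_1 n^{-\rho})^L\le e^{-p_1 n^{-\rho}L}$, which can be driven below $1/3$. A union bound over the two bad events then shows the query succeeds with probability at least $1/3$: conditioned on their complements, any genuine near neighbor $p^\ast$ appears in some scanned bucket, and since every point examined before a successful stop lies outside $B(q,cs)$ there are fewer than $3L$ of them, so the scan does not abort before reaching $p^\ast$ (or some other point of $B(q,cs)$); and any point actually returned is in $B(q,cs)$ by construction.

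Finally I would tally the resources. Computing each $g_i(q)$ costs $k=O(\log n)$ evaluations of functions from $\HF$ --- here using that $p_2$ is a constant bounded away from $1$, so $1/\log(1/p_2)=O(1)$ --- for $O(Lk)=O(n^\rho\log n)$ total work, and the candidate scan touches $O(L)=O(n^\rho)$ points at $O(1)$ similarity cost each, giving query time $O(n^\rho\log n)$. Each of the $L$ tables stores $n$ pointers, so the space is $O(nL)=O(n^{1+\rho})$, as claimed.

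I expect the only delicate point to be the simultaneous balancing of the two failure modes with a single $(k,L)$ pair: $k$ must be large enough to keep spurious collisions down to $O(L)$ per query, yet small enough that the survival probability $p_1^k$ of a true near neighbor stays $\Omega(n^{-\rho})$ so that $L=O(n^\rho)$ tables already suffice --- and this tension is resolved exactly by the choice $\rho=\log(1/p_1)/\log(1/p_2)$. Boosting the constant failure probability to an arbitrarily small $f$, if desired, is then the routine matter of running $O(\log(1/f))$ independent copies of the structure.
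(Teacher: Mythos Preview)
Your proposal is correct and is exactly the classical Indyk--Motwani bucketing argument. Note, however, that the paper does not itself supply a proof of this theorem: it is stated in the preliminaries as a restatement of Theorem~3.4 of~\cite{HIM12} and simply cited. The same amplification-by-concatenation construction you describe (choosing $t=\log n/\log(1/p_2)$ hash functions per table and $L=\Theta(n^\rho)$ tables) does reappear later in the paper's proof of Theorem~\ref{thm:apx-oms}, adapted to the $k$-neighbor setting, so your approach is fully aligned with what the paper relies on.
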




The notion of asymmetric LSH was defined in~\cite{SL14}, who also showed approximate near neighbor search can also be solved using asymmetric LSH.
\begin{defn}[Asymmetric LSH]
A family of functions $\HF$, along with the two vector functions $P: R^d \rightarrow R^{d'}$ ({\em preprocessing transformation}) and $Q: R^d \rightarrow R^{d'}$ ({\em query transformation}), is called $(c, s, p_1, p_2)$-sensitive for $R^d$ with similarity measure $S$ if for any $p, q \in R^d$:
\begin{itemize}
\item if $S(q, p) \geq s$, then $\Pr_{h \in \HF}[h(Q(q))) = h(P(p))] \geq p_1$,
\item if $S(q, p) \leq cs$, then $\Pr_{h \in \HF}[h(Q(q)) = h(P(p))] \leq p_2$.
\end{itemize}
\end{defn}


\begin{defn}[Approximate maximum inner product search]
\label{defn:apx-mips}
Given a collection $P \subset R^d$ of size $n$, the $c$-approximate Maximum Inner Product Search (MIPS) with failure probability $f$ is to support the following query: given an input query point $q \in R^d$, find $p' \in P$ such that
\[ q^Tp' \geq c \max_{p \in P}q^T p,\]
with probability $1-f$.
\end{defn}

A connection between inner product and cosine similarity was established by Shrivastava and Li~\cite{SL15a} defining the following preprocessing and query transformations from $R^d$ to $R^{d+m}$ ($m = 3$ suffices):
\[P(x) \coloneq [x; 1/2 - ||x||_2^2; 1/2 - ||x||_2^4; \ldots; 1/2 - ||x||_2^{2^m}, \]
\[Q(x) \coloneqq [x; 0; 0; \ldots; 0]. \]

This lets one use an LSH scheme for cosine similarity to obtain an asymmetric LSH scheme for inner product, and thereby solve approximate MIPS. In this paper, we use the cross-polytope LSH of Andoni et al.~\cite{AILRS15} for cosine similarity along with the preprocessing and query transformations of~\cite{SL15a} for this purpose.


\subsection{Potential match search}
\label{subsec:prob}
Let $p_1, p_2, \ldots p_l$ be points on earth. Let $<p_1, p_2, \ldots, p_l>$ denote a route that starts at point $p_1$, traverses $p_2$, \ldots $p_{l-1}$ in order and ends at $p_l$, and $C(<p_1, p_2, \ldots, p_l>)$ denote its cost. Given a ride $r$, let $r_s$ and $r_t$ denote its pickup and dropoff locations respectively. The cost of the ride $r$, denoted by $C(r)$ abusing notation, is defined as $C(<r_s, r_t>)$. The cost can be measured in terms of travel distance, travel duration or any other metric linear in the route.

Fix a cost function. The utility of matching two rides comes from the cost savings achieved by serving them together compared to serving them individually. More formally, abusing notation, let $C(\{r, r'\})$ denote the cost of serving rides $r$ and $r'$ together: 
\begin{equation*}
    \begin{split}
        C(\{r, r'\}) = \min\{& C(<r_s, r'_s, r_t, r'_t>), C(<r_s, r'_s, r'_t, r_t>), \\
        & C(<r'_s, r_s, r'_t, r_t>), C(<r'_s, r_s, r_t, r'_t>)\}.
    \end{split}
\end{equation*}
We will assume that it is feasible to match two rides $r$ and $r'$ only if the matching satisfies a maximum allowable delay constraint for each of the rides. The feasibility function $F(\{r, r'\})$ is $1$ if it is feasible to match $r$ and $r'$, $0$ otherwise. The matching utility $U$ of matching rides $r$ and $r'$ together is then defined as:
\[ U(\{r, r'\}) = ( C(r) + C(r') - C(\{r, r'\})) * F(\{r, r'\}).\]

Note that in order to maximize the total matching utility from all the matches made, it may not be desirable to match a given ride with the one with which it has the highest matching utility. Therefore, the approximate potential match search is to find $k$ matches for every ride that approximately maximize matching utility, for a small enough $k$ so that a sparse, utility-aware shareability network can be constructed. An optimal (non-bipartite) matching on this sparse shareability network then yields approximately optimal total matching utility.

\begin{defn}[Approximate potential match search]
\label{defn:apx-pms}
Let $R$ be a set of rides. Let $S$ be the similarity measure between two rides defined as the matching utility between them. The $(c, s, k)$-approximate Potential Match Search (PMS) for $c < 1$ with failure probability $f$ is to construct a data structure over the set $R$ supporting the following query:
given any query ride $q$, if there exists $k$ rides $r_1, r_2, \ldots, r_k \in R$ such that $S(q, r) \geq s \ \forall r \in \{r_1, r_2, \ldots, r_k\}$, then report some $k$ rides $r'_1, r'_2, \ldots, r'_k \in R$ such that $S(q, r') \geq cs \ \forall r' \in \{r'_1, r'_2, \ldots, r'_k\}$, with probability $1 - f$.
\end{defn}

\subsection{Heuristic methods}
\label{subsec:heuristic}
To the best of our knowledge, search for approximate ride matches has not been formalized before this work and there are no principled methods known. The state-of-the-art heuristic methods include:
\begin{itemize}
    \item {\bf CLOSEBY:} $k$ nearest rides to the given ride with respect to pickup location are chosen as potential matches.
    
    \item {\bf HAVERSINE:} $k$ rides with highest Haversine matching utilities with the given ride are chosen as the potential matches. Haversine matching utility is defined with respect to Haversine cost function which is the Haversine distance between two points.
    
    \item {\bf CLOSEBY-HAVERSINE:} A hybrid approach of the above two where a sufficiently large number of rides nearest to the given ride with respect to pickup location are first selected, and the $k$ among them with highest Haversine matching utilities with the given ride are then chosen as potential matches.
\end{itemize}

\section{Algorithm for approximate PMS}
\label{sec:alg}

We first find a suitable representation for rides in a high-dimensional ambient space and define a similarity measure between two rides that approximately captures the matching utility between them. We then use the asymmetric transformations of~\cite{SL15a} (from MIPS to MCSS) combined with cross-polytope LSH construction of~\cite{AILRS15} (for MCSS) to obtain our algorithm for approximate PMS.

\subsection{Spatial ride match}
\label{subsec:spatial}
Consider two rides $r$ and $r'$ as shown in Figure~\ref{fig:rides} and suppose it is feasible to match them together. For brevity, we denote the pickup $r_s$ and dropoff $r_t$ for ride $r$ simply as $s$ and $t$ respectively. Similarly, the pickup and dropoff of ride $r'$ are denoted by $s'$ and $t'$ respectively. Let $<s, a, b, c, d, t>$ and $<s', a', b', c', t'>$ denote the routes of rides $r$ and $r'$ respectively.

\begin{figure}
  \includegraphics[width=\linewidth]{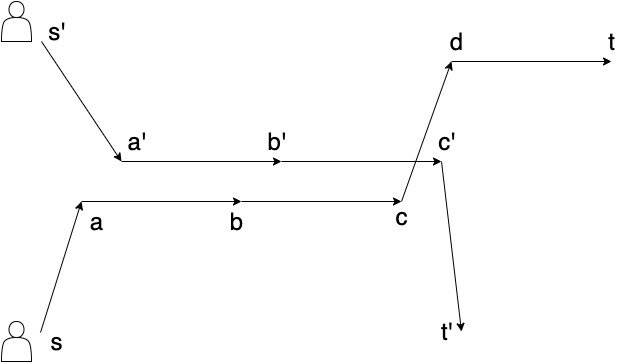}
  \caption{Two rides $r$ and $r'$. The route for $r$ is $<s, a, b, c, d, t>$ and for $r'$ is $<s', a', b', c', t'>$.}
  \label{fig:rides}
\end{figure}

Intuitively, since points $a, b$ and $c$ are close to points $a', b'$ and $c'$ respectively, the matching utility, which is the cost savings by matching $r$ and $r'$ together, is approximately the cost $C(<a, b>)$ of the route segment $(a, b)$ plus the cost $C(<b, c>)$ of the route segment $(b, c)$. In order to capture the notion of spatial proximity of points, we discretize space and represent each point of the route by the space discretization it falls in. We can use discretization using geohashes or S2 cells for this purpose. Let the space discretized node for points $a$ and $a'$ be $A$, and so on, as shown in Figure~\ref{fig:rides-discretized}. The space discretized route for rides $r$ and $r'$ are $<S, A, B, C, D, T>$ and $<S', A, B, C, T'>$ respectively.

\begin{figure}
  \includegraphics[width=\linewidth]{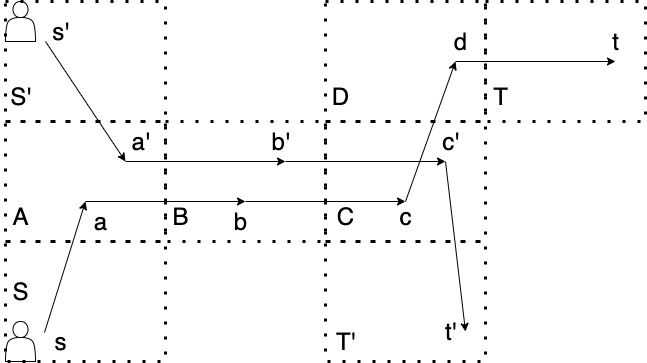}
  \caption{Two rides $r$ and $r'$. The space discretized route for $r$ is $<S, A, B, C, D, T>$ and for $r'$ is $<S', A, B, C, T'>$. The space discretization is shown in dotted lines and the space discretized nodes are shown in capital letters.}
  \label{fig:rides-discretized}
\end{figure}

Since we are interested in the cost of the overlapping segments, we represent rides by the set of space discretized edges in their routes instead of the sequence of space discretized nodes. We call this the spatial set representation of the ride. (Note that this is equivalent to 2-shingling viewing the route as a document; shingling is a popular technique in document similarity search.) Representing rides by its set of edges has the additional benefit that we get directionality for free. For example, two rides with exactly reversed routes of each other will have no edges in common.

Representing ride $r$ with the edge set $\{SA, AB, BC, CD, DT\}$ and ride $r'$ with $\{S'A, AB, BC, CT'\}$, the utility of matching them is roughly the sum of the costs of the edges in their intersection. Here, the cost of an edge between two space discretized nodes can be approximated as the cost between the mid-points of the discretizations. (For this approximation to work well, we must have sufficiently fine space discretization).

In view of the above discussion, ignoring match feasibility, approximate PMS can be solved by solving the approximate {\em Spatial Overlapping Match Search}:

\begin{defn}
\label{defn:apx-soms}
Let $R$ be a set of rides. Let $S$ be the similarity measure between two rides defined as the sum of the costs of the edges in the intersection of their spatial set representations. The $(c, s, k)$-approximate Spatial Overlapping Match Search (SOMS) for $c < 1$ with failure probability $f$ is to construct a data structure over the set $R$ supporting the following query:
given any query ride $q$, if there exists $k$ rides $r_1, r_2, \ldots, r_k \in R$ such that $S(q, r) \geq s \ \forall r \in \{r_1, r_2, \ldots, r_k\}$, then report some $k$ rides $r'_1, r'_2, \ldots, r'_k \in R$ such that $S(q, r') \geq cs \ \forall r' \in \{r'_1, r'_2, \ldots, r'_k\}$, with probability $1 - f$.
\end{defn}

For simplicity of exposition, let us assume that the cost of each edge is unit. This assumption is somewhat justified if it can be ensured that the routing engine returns routes with approximately equidistant adjecent nodes; however, we will relax it soon. With this assumption, the similarity between two rides is simply the cardinality of the intersection of their spatial set representations. Since cardinality of set intersection can be computed as the inner product of the corresponding characteristic vectors, we represent each ride by the characteristic vector of its spatial set representation. In other words, the ambient space is a high-dimensional space where each dimension is indexed by a space discretized edge. The spatial vector representation of a ride is a vector in this space with all $0$'s except a $1$ along a dimension iff the corresponding edge is in the spatial set representation of the ride. The cardinality of the intersection between the set representations of two rides is the inner product between their vector representations. Approximate SOMS can thus be solved by solving approximate MIPS.

\paragraph{Relaxing unit edge cost assumption.} Let us now relax the assumption of unit cost per edge. We will define two vector representations for each ride: {\em spatial preprocessing vector representation} used while creating the asymmetric LSH dataset, and {\em spatial query vector representation} used while querying the asymmetric LSH dataset to find overlapping matches for the ride. Both the vector representations are defined in the same ambient space as defined above.

\begin{itemize}
\item {\bf spatial preprocessing vector representation:} A vector with all $0$'s except for along dimensions for which the corresponding edge is in the set representation of the ride, in which case it is the cost of that edge.
\item {\bf spatial query vector representation:} A vector with all 0's except for along dimensions for which the corresponding edge is in the set representation of the ride, in which case it is $1$.
\end{itemize}

It is easy to see that with the above representations, the similarity between two rides $r$ and $q$ can be computed as the inner product between the spatial preprocessing vector representation of $r$ and the spatial query vector representation of $q$. Thus, like before, approximate SOMS can be solved via approximate MIPS.

\subsection{Spatio-temporal ride match}
\label{subsec:spatio-temporal}
Let us now turn our attention to accounting for match feasibility. Let $\Delta T$ be the maximum allowable delay for any ride. We discretize time into intervals of length $2\Delta T$. We define a space-time discretized node as a pair consisting of a space discretization and a time discretization. A space-time discretized edge is simply an ordered pair of such nodes. The space discretized nodes in the route of a ride can be annotated with the discretized time of reaching there if served without any delay. Thus, the route of a ride can be represented as a sequence of space-time discretized nodes, or equivalently, as a set of space-time discretized edges. We call this the spatio-temporal set representation of the ride.

If rides $r$ and $r'$ are feasible to match and an overlapping segment contributes to the matching utility between them, the segment must be traveled for the unmatched rides within time $\Delta T$ of each other. This implies there is good chance that the spatio-temporal set representations of both the rides share the corresponding space-time discretized edge. Thus, approximate PMS can be solved by solving the approximate {\em (Spatio-Temporal) Overlapping Match Search}:

\begin{defn}
\label{defn:apx-oms}
Let $R$ be a set of rides. Let $S$ be the similarity measure between two rides defined as the sum of the costs of the edges in the intersection of their spatio-temporal set representations. The $(c, s, k)$-approximate (Spatio-Temporal) Overlapping Match Search (OMS) for $c < 1$ with failure probability $f$ is to construct a data structure over the set $R$ supporting the following query:
given any query ride $q$, if there exists $k$ rides $r_1, r_2, \ldots, r_k \in R$ such that $S(q, r) \geq s \ \forall r \in \{r_1, r_2, \ldots, r_k\}$, then report some $k$ rides $r'_1, r'_2, \ldots, r'_k \in R$ such that $S(q, r') \geq cs \ \forall r' \in \{r'_1, r'_2, \ldots, r'_k\}$, with probability $1 - f$.
\end{defn}

Like before, we can construct two spatio-temporal vector representations for a ride. Each dimension of the ambient vector space is now indexed by a space-time discretized edge. The {\em spatio-temporal preprocessing vector representation} of a ride is a vector of all $0$'s in this space except for any dimension present in its spatio-temporal set representation where it is the cost of the edge. Similarly, the {\em spatio-temporal query vector representation} of a ride is a vector of all $0$'s except for any dimension present in its spatio-temporal set representation where it is $1$. It is now easy to see that we can solve approximate OMS via MIPS using these two vector representations.

\subsection{The algorithm}
\label{subsec:alg}
We are now ready to formally present our algorithm for approximate potential match search. As noted in the previous subsection, our strategy is to solve it via solving approximate OMS. The algorithm is formally presented in Algorithm~\ref{alg:apx-oms}.

\begin{algorithm}
\KwIn{A set $R$ of $n$ rides.}
\KwOut{For each ride $r$, a set $S_r$ with $|S_r| = k$.}
Let $U = 0.75$\;
\For{each ride $r$ in R}{
  Construct the spatio-temporal preprocessing vector representation $p_r$ of ride $r$\;
  Normalize the vector $p_r$ to have $||p_r||_2 \leq U$ (See~\cite{SL15a}\;
  Apply the preprocessing transformation of~\cite{SL15a} (see Section~\ref{sec:prelims}) on $p_r$ to get $P(p_r)$ with $m = 2$\;
}
Construct the LSH dataset using cross-polytope LSH of~\cite{AILRS15} with vectors $P(p_r)$, $\forall r \in R$\;
\For{each ride $r$ in R}{
  Construct the spatio-temporal query vector representation $q_r$ of ride $r$\;
  Normalize the vector $q_r$ to have $||q_r||_2 = 1$\;
  Apply the query transformation of ~\cite{SL15a} (see Section~\ref{sec:prelims}) on $q_r$ to get $Q(q_r)$ with $m = 2$\;
  Construct $S_r$ by retrieving $k$ nearest neighbors of $Q(q_r)$ from the LSH dataset\;
}
\caption{Algorithm for approximate PMS.}
\label{alg:apx-oms}
\end{algorithm}

We now show that our algorithm is efficient and has high probability of success.

\begin{thm}
\label{thm:apx-oms}
Given a set $R$ of $n$ rides, Algorithm~\ref{alg:apx-oms} solves $(c, s, k)$-approximate OMS requiring $O(n^\rho (k + \log n) \log k)$ time per query and $O(n^{1 + \rho} \log k)$ total space for some $\rho < 1$ depending on $c$ and $s$. Therefore, the total running time for the algorithm is $O(n^{1+ \rho} (k + \log n) \log k)$.
\end{thm}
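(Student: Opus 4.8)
The plan is to chain three reductions --- $(c,s,k)$-approximate OMS $\Rightarrow$ a $k$-answer version of approximate MIPS $\Rightarrow$ a $k$-answer version of Maximum Cosine Similarity Search (MCSS) --- and then solve the last one with the cross-polytope LSH of~\cite{AILRS15} fed into (a mild strengthening of) Theorem~\ref{thm:lsh2nn}. The first reduction is already carried out in Section~\ref{subsec:spatio-temporal}: writing $Q_r$ and $P_r$ for the spatio-temporal query and preprocessing vector representations of ride $r$, we have $S(q,r)=\langle Q_q,P_r\rangle$ for all rides. So a structure that, given $q$, returns $k$ dataset vectors of inner product $\ge cs$ with it whenever $k$ vectors of inner product $\ge s$ exist is exactly a structure for $(c,s,k)$-approximate OMS; this is a $k$-answer approximate MIPS instance on $\{P_r:r\in R\}$.

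For the second reduction I would apply the Shrivastava--Li transformations $P(\cdot),Q(\cdot)$ with $m=2$, after the normalizations in Algorithm~\ref{alg:apx-oms} (scale all $P_r$ by one common factor so each has norm $\le U=0.75$, and scale each $Q_q$ to unit norm). The one computation to carry out is the norm identity: for $\|x\|_2\le U<1$ the appended coordinates telescope to $\|P(x)\|_2^2=m/4+\|x\|_2^{2^{m+1}}\in[\,m/4,\ m/4+U^{2^{m+1}}\,]$, a window of multiplicative width $1+o(1)$ (here $[0.5,0.6]$). Since $Q(q)$ is a unit vector with zero appended coordinates, $\cos\big(Q(q),P(p)\big)=\langle q,p\rangle/\|P(p)\|_2$, i.e.\ cosine similarity is, uniformly over the data set, a fixed positive multiple of the inner product up to that $1+o(1)$ factor. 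Hence the $(c,s,k)$ thresholds for inner product become $(c',\sigma,k)$ thresholds for cosine similarity with $c'=c\cdot\|P\|_{\max}/\|P\|_{\min}$, still below $1$ for any $c$ bounded away from $1$ (and $m$ can always be enlarged, or $U$ shrunk, to cover any fixed $c<1$). This MCSS instance --- build on $P(P_r)$, probe with $Q(Q_q)$ --- is precisely what Algorithm~\ref{alg:apx-oms} does.

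Now I would invoke~\cite{AILRS15}: the cross-polytope family is $(c',\sigma,p_1,p_2)$-sensitive for cosine similarity with $\rho=\log(1/p_1)/\log(1/p_2)<1$ a constant depending only on $c',\sigma$, hence on $c,s$. Feeding this family into the construction underlying Theorem~\ref{thm:lsh2nn} --- $L$ hash tables, $k_0=\Theta(\log n)$ concatenated hashes per table chosen so a ``far'' point collides with the query in a fixed table with probability $\le 1/n$ (hence with probability $\ge n^{-\rho}$ when it is a true neighbor) --- yields, for $L=\Theta(n^\rho)$, the single-neighbor structure of Theorem~\ref{thm:lsh2nn}. Because the family is asymmetric we store $P(P_r)$ and probe with $Q(Q_q)$.

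The last step, and the main obstacle, is boosting from one neighbor to $k$ of them while keeping the per-query time at $O(n^\rho(k+\log n)\log k)$. I would raise the table count to $L'=\Theta(n^\rho\log k)$ and cap the number of distinct candidate rides examined at $C(k+L')$ for a suitable constant $C$, reporting failure if the cap is hit. Then: (i) each of the $\ge k$ rides with $S(q,\cdot)\ge s$ misses the query's bucket in all $L'$ tables with probability $\le(1-n^{-\rho})^{L'}\le 1/(3k)$, so by a union bound over exactly these $k$ rides they are all in the probed buckets with probability $\ge 2/3$, and since $c<1$ each has $S\ge s>cs$, i.e.\ they already furnish $k$ valid answers; (ii) the expected number of (far ride, table) collisions with the query is $\le n\cdot L'\cdot(1/n)=L'=O(n^\rho\log k)$, so by Markov the distinct far rides examined are $O(n^\rho\log k)$ except with small constant probability, the cap is not hit, and all $k$ good rides are among those examined. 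Deduplicating candidates with a hash set and maintaining the running top-$k$ by $S$ in a heap, we perform $O(n^\rho\log k)$ bucket probes ($O(\log n)$ hashing each), $O(n^\rho k\log k)$ raw candidate touches ($O(1)$ each), and $O(k+n^\rho\log k)$ heap updates ($O(\log k)$ each); summing gives $O(n^\rho(k+\log n)\log k)$ query time and $O(n L')=O(n^{1+\rho}\log k)$ space, with constant failure probability. The difficulty is exactly this bookkeeping: the $\log k$ (not $\log n$) factor on the tables forces the $1/(3k)$ per-ride failure budget and the union bound over precisely the $k$ targets, while the additive $k$ term forces the Markov bound on far collisions and the deduplication of candidates. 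A general failure probability $f$ costs an extra $O(\log(1/f))$ factor via independent repetitions, and multiplying the per-query bound by the $n$ queries gives the stated total running time.
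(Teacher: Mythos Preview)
Your proposal is correct and follows essentially the same route as the paper: reduce OMS to MIPS via the spatio-temporal vector representations, MIPS to MCSS via the Shrivastava--Li transformations, then apply cross-polytope LSH with $\Theta(\log n)$-fold concatenation and $L=\Theta(n^\rho\log k)$ tables, union-bounding the miss probability over the $k$ target rides and bounding the expected number of far collisions by $L$. You are in fact more explicit than the paper about the multiplicative norm slack in the MIPS$\to$MCSS step and about the Markov-plus-cap device for far collisions; the paper's version simply stops after retrieving $k$ non-dissimilar candidates and cites the expected $L$ far collisions directly.
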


\begin{proof}
Let $\HF$' be the family of cross-polytope LSH functions of~\cite{AILRS15}. Similar to arguments in~\cite{SL15a}, coupled with the preprocessing and the query transformations, they form a $(c, s, p_1, p_2)$-sensitive asymmetric LSH family $\HF$ for MIPS where $p_1, p_2$ depends on $c$ and $s$. Authors in~\cite{SL15a} show that $U = 0.75$ and $m = 2$ are reasonable parameter choices which we fixed in our algorithm.

We use the technique of amplification by concatenating $t = \log n/\log (1/p_2)$ randomly picked hash functions from $\HF$ to get the family of hash functions $\HG$. If the similarity between two rides $q$ and $r$, as defined in Definition~\ref{defn:apx-oms}, is at least $s$, we call them "similar". On the other hand, if the similarity between them is at most $cs$, we call them "dissimilar". We have:

\begin{itemize}
\item if $q$ and $r$ are similar, then $\Pr_{g \in \HG}[g(q) = g(r)] \geq p_1^t = n^{-\rho}$ where $\rho = \log (1/p_1)/\log (1/p_2) < 1$.
\item if $q$ and $r$ are dissimilar, then $\Pr_{g \in \HG}[g(q) = g(r)] \leq p_2^t = 1/n$
\end{itemize}

We construct the LSH data structure by using $L$ hash functions from $\HG$ to create $L$ hash tables. The LSH dataset is constructed by hashing the spatio-temporal preprocessing vector representations of the rides in $R$ to each of these $L$ hash tables. Fix a query ride $q$. For a "similar" ride $r$, probability of no hash collision in any of the $L$ hash tables is $(1 - n^{-\rho})^L$. We want this probability to be at most $f/k$. Setting $L = n^\rho \ln(k/f)$, we have:
\begin{align*} 
(1 - n^{-\rho})^L &\leq (e^{-n^{-\rho}})^L \\
&= (e^{-n^{-\rho}})^{n^\rho \ln(k/f)} \\
&= f/k
\end{align*}

By union bound, the probability that for any of the $k$ similar rides $r_1, r_2, \ldots, r_k$, there is no hash collision in any of the $L$ tables is upper bounded by $f$.
For a "dissimilar" ride $r$, the probability of hash collision in any of the $L$ hash tables is upper bounded by $L/n$.

Given a query ride, we hash the spatio-temporal query vector representation of $q$ using each of the $L$ hash functions to get the corresponding $L$ hash buckets. We retrieve all the rides from the those $L$ hash buckets. We check the similarity of each of these rides with $q$ and return $k$ non-dissimilar ones (i.e. similarity $> cs$ or "approximately similar"). 

Time needed for hashing the query is $O(Lt) = O(n^\rho \log k \log n)$. Time required for processing the first $k$ non-dissimilar rides is $O(kL) = O(k n^\rho \log k)$ as the same non-dissimilar ride can be retrieved at most $L$ times. Since the expected number of dissimilar rides is at most $n(L/n) = L$, time needed for processing the dissimilar rides is $O(L) = O(n^\rho \log k)$. Therefore, our algorithm solves $(c, s, k)$-approximate $k$-OMS requiring $O(n^\rho (k + \log n) \log k)$ time per query. The space requirement of the algorithm is $O(nL) = O(n^{1 + \rho} \log k)$.
\end{proof}
\section{Experiments}
\label{sec:exp}

In order to validate our algorithm and the overall approach of solving approximate PMS via solving approximate OMS, we conducted extensive experimentation and evaluated the performance of our algorithm against benchmark approaches.

\subsection{Setup and methodology}
\label{subsec:setup_method}

\paragraph{Objective.} The problem we set out to experiment on is to search for approximate potential matches where the cost function is {\em travel duration}. Hence, the matching utility of two rides was defined as the savings in terms of travel duration achievable by matching and serving the two rides together.
 
\paragraph{Benchmarks.} We evaluated the performance of our algorithm (which we call LSH) against three benchmark state-of-the-art heuristics: CLOSEBY, HAVERSINE, and CLOSEBY-HAVERSINE. These heuristics are described in Section~\ref{sec:prelims}. Additionally, we computed the optimal matching utility achievable by any approach given unbounded time, and measured the performance of each approach against this optimal.

\paragraph{Metrics.} The metrics of interest to us are (i) total matching utility, (ii) computation time to construct the shareability network, and (iii) number of routing calls. Ideally, we want an algorithm that achieves high matching utility and is efficient both in terms of time for constructing edges of the shareability network and number of routing calls.

Routing calls are different from other computations done during the construction of the shareability network. This is because services often need to pay for these calls in terms of network latency as well as usage cost to a routing service provider. It is common for the routing service provider to allow batch routing calls. We accounted 10 ms (including computation time and network latency) for every batch routing call with at most 100 routing requests. This choice of accounting insulates our results against any time-varying irregularities in the execution efficiency of these calls. Additionally, we accounted for the usage cost by also tracking the number of routing calls separately. 

\paragraph{Experiments.} We wanted to evaluate our algorithm under different traffic patterns and varying load. For traffic pattern, we constructed two experiment scenarios:
\begin{itemize}
    \item {\bf Morning commute:} This scenario consisted of 21310 rides from NY yellow taxi trip data requested on 2016-06-08 (Wed) between 8:00 AM and 9:00 AM local time.
    
    \item {\bf Evening commute:} This scenario consisted of 19610 rides from NY yellow taxi trip data requested on 2016-06-08 between 6:00 and 7:00 PM local time. 
\end{itemize}
For varying load, we subsampled the rides in each scenario at subsampling rates of $20\%$, $40\%$, $60\%$, $80\%$ and $100\%$ (full dataset). Thus we ran a total of 10 experiments, 5 for each scenario.

To model feasibility of ride matching, we enforced a match constraint of maximum allowable pickup delay which rendered a match infeasible if either of the rides in the match incurred a delay of more than $10$ minutes due to the matching. We used travel duration between pickup locations as pickup delay.

\paragraph{Implementation.}  We implemented our algorithm as well as the three benchmark approaches to find top-k ($k = 10$) ride matches for each ride in the experiment.

The CLOSEBY approach was implemented using the Ball-Tree algorithm for near neighbor search using Haversine distance between pickup locations as the distance meaasure.

The HAVERSINE approach was implemented by computing pairwise Haversine matching utilities for every pair of rides, and then selecting the $k$ highest utility matches for each ride. For the CLOSEBY-HAVERSINE approach, we first found $1000$ matches for each ride using the CLOSEBY approach, and then selected $k$ matches among them with highest Haversine matching utility with the ride.

The LSH approach was implemented with geohash-$7$ space discretization and time discretization of $20$ minute intervals. The routes were obtained by running an instance of Open Source Routing Machine (OSRM) backend server locally. We set up the LSH data structure using the cross-polytope LSH family of~\cite{AILRS15} that is implemented in the FALCONN library. For this, we used $100$ hash tables and $14$ hash bits per table.

\paragraph{Evaluation.} The total matching utility achieved by an approach was computed by solving non-bipartite matching on the shareability network constructed. To recap, the nodes of the network were the rides in the experiment. A weighted (undirected) edge connecting two ride nodes was added to the network if either ride appeared as a match for the other. The edge was assigned a weight equal to the matching utility of the match between the two rides. The maximum weight matching obtained by solving this instance of non-bipartite matching gave us the total matching utility.

\subsection{Results}
\label{subsec:results}
Matching utilities obtained using different approaches under varying load for the morning commute scenario is presented in Figure~\ref{fig:morning_commute_utility}, and for the evening commute scenario in Figure~\ref{fig:evening_commute_utility}.

\begin{figure}[ht]
  \centering
  \includegraphics[width=\linewidth]{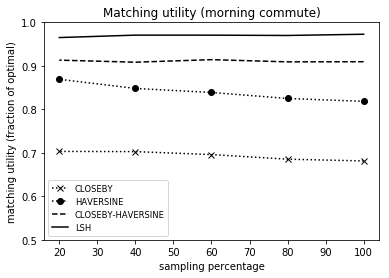}
  \caption{Matching utility as a fraction of the optimal for the morning commute scenario under varying load.}
  \label{fig:morning_commute_utility}
\end{figure}

\begin{figure}[ht]
  \centering
  \includegraphics[width=\linewidth]{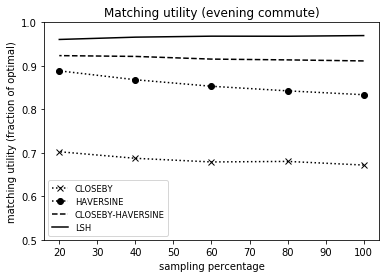}
  \caption{Matching utility as a fraction of the optimal for the evening commute scenario under varying load.}
  \label{fig:evening_commute_utility}
\end{figure}

Time to compute the shareability network using different approaches under varying load for the morning commute scenario is presented in Figure~\ref{fig:morning_commute_timing}, and for the evening commute scenario in Figure~\ref{fig:evening_commute_timing}.

\begin{figure}[ht]
  \centering
  \includegraphics[width=\linewidth]{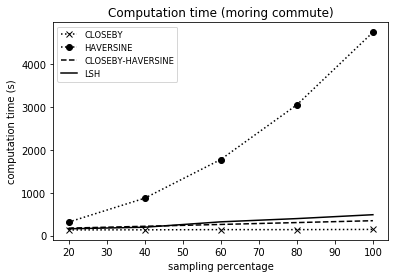}
  \caption{Shareability network computation time for the morning commute scenario under varying load.}
  \label{fig:morning_commute_timing}
\end{figure}

\begin{figure}[ht]
  \centering
  \includegraphics[width=\linewidth]{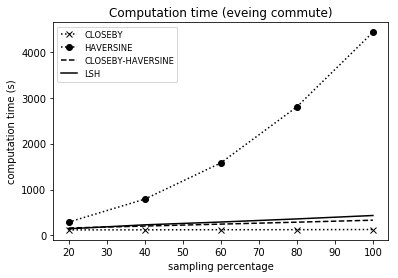}
  \caption{Shareability network computation time for the evening commute scenario under varying load.}
  \label{fig:evening_commute_timing}
\end{figure}

As for the number of routing calls, the LSH approach does require one upfront routing call per ride to get its route, while the other approaches only need to make routing calls after their match search process (i.e., once the edges of the shareability network are found and the edge weights need to be computed). However, it should be noted that the total number of routing calls is the same for all the approaches. To see this, note that computing the matching utility between two rides $r$ and $r'$ requires $8$ segment costs: $r_s$ to $r_t$, $r'_s$ to $r'_t$, $r_s$ to $r'_s$, $r'_s$ to $r_s$, $r_t$ to $r'_t$, $r'_t$ to $r_t$, $r_s$ to $r'_t$, and $r'_s$ to $r_t$. Thus for the entire network, we need $1$ routing call per node, and $6$ routing calls per edge. The calls per node are not dependent upon the outcome of the match search process and are known in advance. Hence those calls can be made upfront. These are precisely the routing calls the LSH approach requires for its match search process.

\subsection{Discussion}
\label{subsec:discussion}
In each of the 10 experiments, the optimal matching utility was found to be around $40\%$ of the total ride cost, suggesting huge potential for sharing rides and reducing cost. 

Out of the four approaches, CLOSEBY is the fastest as it simply needs to find near neighbors based on Haversine distance between pickup locations. Not surprisingly, it performs the worst in terms of matching utility. HAVERSINE is impractically slow, being required to perform an exhaustive search on a quadratic search space, even if without routing calls. Perhaps surprisingly, even with this exhaustive search, its performance in terms of matching utility is worse than that of CLOSEBY-HAVERSINE. This is because an exhaustive search finds matches that may have better overlap with a given ride but not feasible to match with it.

In each of the experiments, LSH consistently achieved about $6\%$ higher matching utility than CLOSEBY-HAVERSINE. This adds up to significant cost savings that can be shared by the riders, the drivers and the platform. It is important to note that CLOSEBY-HAVERSINE misses this opportunity not because of restricted computation time. In fact, it is faster than LSH, growing linearly with load while LSH grows slightly superlinearly. CLOSEBY-HAVERSINE suffers from its lack of knowledge of the underlying road network which forces it to miss out on achievable efficiency consistently. This can be confirmed by noting that the matching utility achieved by CLOSEBY-HAVERSINE stays almost constant at slightly over $90\%$ even with decreasing load to $20\%$. Since the number of CLOSEBY candidate matches stays constant at $1000$, this means CLOSEBY-HAVERSINE cannot achieve additional match utility even when it searches over a higher fraction of total rides.

We further note that computation time for LSH, although slightly larger than CLOSEBY-HAVERSINE for higher loads, is still practical.  The algorithm  is completely parallelizable, and the computation time for the shareability network can be brought down to about 30 secs with 10 parallel workers even for a gigantic ride pool of 20000 rides.  One can also employ dimensionality reduction techniques mentioned in the next section.

\begin{figure}[ht]
  \centering
  \includegraphics[width=\linewidth]{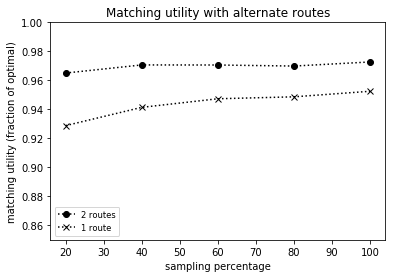}
  \caption{Matching utility as a fraction of the optimal with alternate routes.}
  \label{fig:utility_alt_routes}
\end{figure}

\paragraph{Alternate routes.} Having access to alternate routes for a ride helps the LSH approach to find better matches. A ride often has multiple routes with almost similar route costs. A feasible match may have a very good overlap with one of the routes that is not the minimum cost one. The computation time does increase with number of alternate routes, however. In our experiments, we used up to one additional alternate route whenever available (available for about $40\%$ of rides). Figure~\ref{fig:utility_alt_routes} compares LSH performance with one route and up to two routes.

\begin{figure}[ht]
  \centering
  \includegraphics[width=\linewidth]{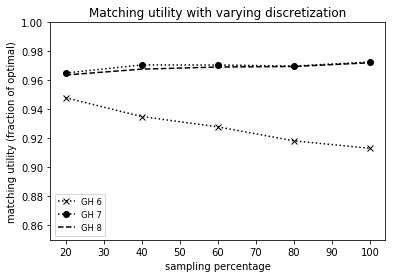}
  \caption{Matching utility as a fraction of the optimal with varying geohash discretization.}
  \label{fig:utility_geohash}
\end{figure}

\paragraph{Discretizations.} It is important to choose the space discretization for the LSH algorithm carefully. Choosing a discretization too coarse improves computation time as the dimension decreases. But it hurts matching utility by finding spurious route overlaps (segments too dissimilar look the same with space discretization) as well as missing some genuine ones (endpoints of an overlapping segment fuse with space discretization). This becomes more pronounced with increasing load as the chance of finding spurious overlaps increases. Choosing a discretization too fine increases the computation time significantly. In some cases, it might also hurt matching utility by causing the algorithm to miss some useful overlaps. Figure~\ref{fig:utility_geohash} compares performance with different space discretizations. In general, choosing a space discretization proportional to the typical length of a route edge works well (geohash-$7$ in our case).  For time discretization, it is best to choose a large interval. This does not increase the dimension and avoids missing feasible matches, but can lead to some infeasible matches which can later be filtered out.

\section{Optimizations and Extensions}
\label{sec:opt_ext}
In this final section, we present some optimizations we learned while carrying out our experiments that we believe can provide significant performance boost in a real-world implementation. We also present several ideas for extensions that we believe can be exploited to extract additional value from our approach.

\subsection{Optimizations}
\label{subsec:opt}
As with machine learning algorithms, LSH-based algorithms can enjoy significant performance boost with the right tuning of parameters and domain-based insights.

\paragraph{Number of tables.} Increasing the number of tables in the LSH data structure improves success probability for finding nearest neighbors but also increases the space requirement and the query time. It is advisable to first choose a suitable value based on space availability and the dataset size.
    
\paragraph{Number of hash functions.} Increasing the number of hash functions per table causes fewer hash collisions improving query time but also decreases the success probability and necessitates larger number of tables. Fortunately, this can be handled using multi-probe query strategies described next. This parameter should be chosen roughly equal to the logarithm of the dataset size and further tuning should be done by performing parameter search jointly with number of query probes.
    
\paragraph{Multi-probe query.} It is possible to query a table multiple times using multi-probe query strategies to improve success probability without increasing the number of tables. The total number of probes across all table per query should be chosen jointly with number of hash functions per table. This can be achieved by iterating over the latter, and for each value doing a binary search over the former to achieve desired success probability and efficiency.
    
\paragraph{Dimension reduction.} The ambient vector space of ride representations has very large dimension and the data is very sparse along most dimensions. This is because most rides happen around busy areas while the remote areas only see few rides. This makes our approach ripe for employing dimensionality reduction techniques to transform the ride vectors to a low dimensional space before storing them in the LSH data structure. This can boost efficiency significantly without hurting success probability much.
    
\paragraph{Normalization.} Normalizing the dataset by centering the ride vectors along each dimension provides a significant gain in success probability.

\subsection{Extensions}
\label{subsec:ext}
Our approach can be extended in several interesting ways.

\paragraph{Driver matching.} Although we presented our approach and conducted our experiments with a match pool of only rides, our algorithm can be extended to also match drivers with rides. For this, each existing driver route (with possibly multiple passengers in the car) can be represented by its vector representation. The algorithm can then find matches between drivers and rides as well as between rides.

\paragraph{General linear cost function.} Our approach works for any cost function that is linear in the route traveled and utility defined as savings in terms of that cost function. Here, linearity means total cost of the route is the sum of costs incurred along the edges of the route. Examples of such cost functions include travel duration, travel distance, costs incurred along the route such as tolls and taxes, or any linear combination of such linear cost functions. This is useful as the cost of serving a ride is often estimated by a combination of time, distance and other factors.

\paragraph{Time varying cost function.} Our approach can naturally adapt to time-varying route costs. For example, travel durations vary significantly with time. As another example, a smart transportation system can impose time-varying tolls to regulate flow of traffic. As long as we have the accurate edge weights, our algorithm can find near-optimal matches for the current time, while the heuristic methods cannot adapt to these changes intelligently.

\paragraph{Incremental computation.} Our algorithm can be implemented to do incremental computation where the LSH data structure is persisted, and gets updated as new rides come (by adding them to the data structure) and rides get matched (by replacing them with the combined driver route). Dynamic LSH schemes are excellent candidates for such an implementation.

\paragraph{Hybrid rideshare.} A rideshare platform can have a ride pool where some rides are on-demand and real-time while others are scheduled ahead of time. Since our algorithm employs a spatio-temporal search, it can intelligently match rides from such hybrid pools respecting pickup constraints. For example, it can match a current ride with another scheduled to arrive 20 minutes from now, if the current ride is expected to reach near the pickup of the future ride in that time and they have good route overlap thereafter.


\bibliographystyle{plain}
\bibliography{lsh_matching}


\end{document}